\documentclass{article}

\usepackage{PRIMEarxiv}
\usepackage{subcaption}
\usepackage{amsmath,amssymb,amsthm}
\usepackage[utf8]{inputenc} 
\usepackage[T1]{fontenc}    
\usepackage{hyperref}       
\usepackage{url}            
\usepackage{booktabs}       
\usepackage{amsfonts}       
\usepackage{nicefrac}       
\usepackage{microtype}      
\usepackage{lipsum}
\usepackage{fancyhdr}       
\usepackage{graphicx}       
\graphicspath{{media/}}     
\newtheorem{assumption}{Assumption}
\newtheorem{remark}{Remark}
\newtheorem{theorem}{Theorem}
\newtheorem{lemma}{Lemma}
\title{Observer-Based Fixed-Time Nested Sliding-Mode Control for Tip-Position Regulation of a Single-Link Flexible Manipulator
\thanks{Atul Sharma, Chayan Kumar Paul, and S. Janardhanan are with the Department of Electrical Engineering, Indian Institute of Technology, Delhi, 110016 India (e-mail: atul.sharma@ee.iitd.ac.in).}}

\author{
  Atul~Sharma,
    Chayan~Kumar~Paul,
      S.~Janardhanan 
}

\begin{document}
\maketitle

\begin{abstract}
This paper presents a novel position control strategy for a single-link flexible manipulator, tailored for applications where precise position must be achieved within strict time constraints. To accomplish this objective, firstly, a nested non-singular terminal sliding mode controller is designed for the system, enabling precise and robust control. Furthermore, a fixed-time sliding mode observer is designed to estimate unmeasured system states accurately in a fixed time, thereby enabling closed-loop control implementation. A stability analysis is presented to guarantee the robustness and efficacy of the proposed composite control algorithm. The effectiveness of the proposed fixed-time controller is demonstrated through numerical simulation on accuracy, stability, and convergence speed. The proposed controller's performance is also compared with that of other state-of-the-art control schemes. The proposed controller is further validated through experiments conducted on a real hardware setup.
\end{abstract}


%


\section{Introduction}

Robotics deployment has produced notable breakthroughs in the fields of exploration and manufacturing. Rigid link manipulators (RLMs) have traditionally been employed in a wide range of applications, including logistics operations \cite{yuan2007research}, infrastructure construction \cite{warszawski1985robotics}, and other multiple fields \cite{li2024advances1}. However, their considerable size and substantial power requirements often restrict their operational efficiency. Because of their lightweight nature, flexible link manipulators (FLMs) provide an alternative. However, this comes at the cost of flexibility; improperly handling it can cause undesired vibrations and is difficult to control. This research aims to design a control algorithm that effectively addresses these challenges and optimises FLM's potential for practical applications.

Many researchers have studied the problem of tip oscillations in FLMs using various control techniques. For effective control design, FLMs must be accurately modelled \cite{korayem2012mathematical, de2001rest, oguamanam2006dynamic, dixit2014cable}. The dynamic model of FLMs is described by partial differential equations (PDEs), but analytical solutions of these equations are not always possible. Li et al. \cite{li2024advances1} offer a recent review that provides a thorough summary of different modelling approaches for FLMs. 

Accurate modelling serves as a foundation for effective control design; once the dynamics are well captured, appropriate control schemes can be designed to mitigate tip oscillations. Several control strategies have been implemented to address the tip oscillations in FLMs. Li et al. \cite{li2024advances2} present an extensive review of different control methodologies for FLMs. Among various flexible manipulator configurations, the Single-Link Flexible Manipulator (SLFM) has been the principal testbed for control research, due to its simplicity and rich nonlinear dynamics.
SLFMs are nonlinear, underactuated, and distributed-parameter systems characterised by strong coupling between rigid and flexible body dynamics. Their tip-position regulation is challenging because the flexible dynamics introduce non-minimum-phase behaviour and unmeasured bending modes.
Early work on boundary-control formulations, based on the Euler–Bernoulli beam model, provides rigorous asymptotic stabilisation by directly acting on boundary variables such as torque or bending moment. Jiang et al. \cite{jiang2015boundary} proposed a boundary-disturbance observer to reject unknown inputs at the clamped end, guaranteeing uniform boundedness of the distributed states. Although such designs achieve theoretical completeness, the solution relies on collocated sensing, exact knowledge of the PDE, and boundary actuation conditions that are rarely met in practical experimental setups. While boundary-control methods provide strong theoretical guarantees, their practical implementation is often constrained by sensing and actuation limitations. In particular, it is not feasible to continuously measure system variables over both space and time, necessitating the use of limited or boundary-based sensing. Moreover, practical actuators cannot always realise the ideal boundary inputs assumed in theoretical formulations. The presence of unmodeled dynamics, measurement noise, and time delays further complicates implementation, thereby limiting the effectiveness of these control strategies in flexible link manipulator systems. To overcome these challenges, researchers have turned to finite-dimensional approximations that enable the design of physically implementable controllers.

Finite-dimensional approximations, such as assumed modes or the finite element method (FEM), enable the design of closed-loop, implementable controllers. Choi and Nho \cite{choi1995sliding} demonstrated vibration suppression using classical sliding mode control (SMC) for a single-link arm, showing strong robustness but only asymptotic convergence. Although classical SMC achieves robustness, its convergence is only asymptotic, motivating the development of terminal-type sliding manifolds designed to ensure faster, finite-time convergence. To improve the convergence rate, terminal sliding-mode control (TSMC) was proposed, achieving finite-time stability but suffering from singularity at the origin. Feng and Yu \cite{feng2002non} addressed this issue using the non-singular TSMC (NTSMC), which is widely adopted for robotic manipulators. However, the settling time of TSMC/NTSMC depends on the initial conditions, precluding guaranteed bounds on the response.
The fixed-time stability framework of Polyakov \cite{polyakov2011nonlinear} provides convergence in a uniformly bounded time independent of initial conditions. It has been integrated with SMC for electromechanical systems (e.g., fixed-time SMC for PMSMs \cite{xu2025adaptive}) but not yet specialised for flexible manipulators. However, achieving high-performance control also requires accurate state estimation, as many flexible-link states, such as bending deflection and its derivatives, are difficult to measure directly. This challenge has led to the development of observer-based control frameworks.

Accurate state estimation is important because bending deflection and its derivatives are often unmeasurable. Chalhoub and Azizi \cite{chalhoub2006design} proposed a nonlinear observer with SMC for an SLFM. Fractional-order SMC designs (Nejad et al., \cite{hamzeh2020precise}) improved precision and chattering reduction but remained asymptotic. More recently, Sharma and Janardhanan \cite{sharma2024position} employed a functional observer within an SMC loop to enhance robustness, yet the convergence time remained dependent on the initial condition.

While these observer designs improve robustness for flexible manipulators, their convergence times remain dependent on initial conditions. Parallel developments in fixed-time observers for ODE systems, e.g., Mechali et al. \cite{mechali2022fixed} (UAV attitude), Hosseinabadi et al. \cite{alinaghi2023fixed} (nonlinear double integrator), and Rezaei et al. \cite{rezaei2024designing} (spacecraft), prove that fixed-time observer–controller synthesis is feasible but have not been extended to the distributed-parameter or flexible-link domain. Inspired by recent progress in fixed-time observer design for finite-dimensional systems such as UAVs and spacecraft, researchers are now exploring similar ideas for distributed-parameter manipulators.



Beyond observer–controller designs, a variety of other strategies have recently emerged to enhance FLM performance. Recent advancements in flexible manipulator control include a linear quadratic optimal boundary control proposed by De Luca et al. \cite{cristofaro2020linear} for SLFMs. Zhu et al. \cite{zhu2023high} presented a two-time-scale adaptive robust control law for tip tracking. Vision-based adaptive control has been explored by Sahu et al. \cite{sahu2023adaptive}. Furthermore, learning-based methodologies, including deep learning \cite{viswanadhapalli2024deep} and reinforcement learning \cite{he2020reinforcement}, have been utilised for tracking control. Reza et al. \cite{mohsenipour2024flexible} introduced a transfer function-based methodology for FLM regulation. Other notable works include boundary disturbance observer-based control for SLFMs subjected to external disturbances, as presented by Zhao et al. \cite{zhao2019boundary}. Sarkhel et al. \cite{sarkhel2023robust} investigated optimal PID controller placement for vibration suppression in rod-type FLMs. Kayastha et al. \cite{kayastha2023comparative} proposed a composite controller that combines predictive control and LQR for post-impact motion regulation of flexible-arm robots. In \cite{sharma2024}, Sharma et al. proposed a functional observer-based output feedback sliding mode control scheme, which employs balanced truncation for model order reduction.  

Despite extensive research, several theoretical and practical challenges remain unresolved. In summary, existing literature provides valuable insight into sliding-mode and observer-based control of flexible manipulators, but several key limitations remain:
\begin{itemize}
\item \textbf{Lack of initial-condition-independent convergence guarantees:} Most available controllers, including NTSMC and fractional-order SMC, ensure finite-time performance only for a given initial condition, not fixed-time regulation with a computable upper bound.

\item \textbf{Absence of fixed-time observer–controller co-design:} Although fixed-time sliding-mode observers have been studied for UAVs and double integrators, no work explicitly integrates such observers with a fixed-time, nonsingular sliding-mode controller for flexible manipulators.

\item \textbf{Singularity and implementation issues:} Terminal-type sliding manifolds often introduce singularities, while high-order or boundary-based PDE controllers require measurements or models that are impractical for low-cost setups.
\end{itemize}
To address the above limitations, this paper proposes a novel fixed-time observer–controller framework for flexible manipulators.
\vspace{-0.4cm}
\subsection{Contributions}
The main contributions are summarised as follows:
\begin{enumerate}
\item The proposed controller is designed using a dynamic model with only the first vibratory mode, but is validated through numerical simulations on a two-mode dynamic model of the SLFM. The observer receives outputs from the two-mode model, which serve as its input. 

\item The proposed scheme bridges the gap between boundary-control theory and practical lumped-parameter implementation by providing a low-order, computationally efficient control structure suitable for experimental validation.

\item A fixed-time sliding-mode observer is designed to reconstruct the unmeasured system dynamics with an explicit convergence-time bound independent of initial conditions.

\item A nested NTSM manifold, in which multiple sliding surfaces are hierarchically embedded within one another, is designed to achieve singularity-free fixed time regulation of the tip position, ensuring global convergence with analytically computable bounds.

\item A rigorous Lyapunov-based stability proof establishes that both observer and controller errors converge in fixed time, guaranteeing overall system stability and robustness against uncertainties.

\item The robustness and effectiveness of the proposed observer–controller framework are demonstrated in the presence of unmodelled higher order dynamics, and its performance is subsequently compared with state-of-the-art control algorithms.

\item The proposed observer–controller framework is further validated through experimental implementation, demonstrating its practical feasibility and effectiveness under real-world operating conditions.
\end{enumerate}

\section{Dynamic Modelling of the SLFM}
Consider an SLFM as depicted in Figure \ref{fig:slfm} of length $l (m)$, which is assumed to have a uniformly distributed mass with linear mass density $\rho (kg/m)$. The following assumptions are taken into consideration while modelling the system: 

\begin{assumption}
    The link is subjected only to small, linear-elastic deformations, primarily bending. Torsional and compressive deformations are neglected.
\end{assumption}
\begin{assumption}
    The effects of gravity and non-linear elastic forces on the link's deformation are considered negligible.
\end{assumption}
\begin{figure}[ht]
    \centering
    \includegraphics[width=0.8\linewidth]{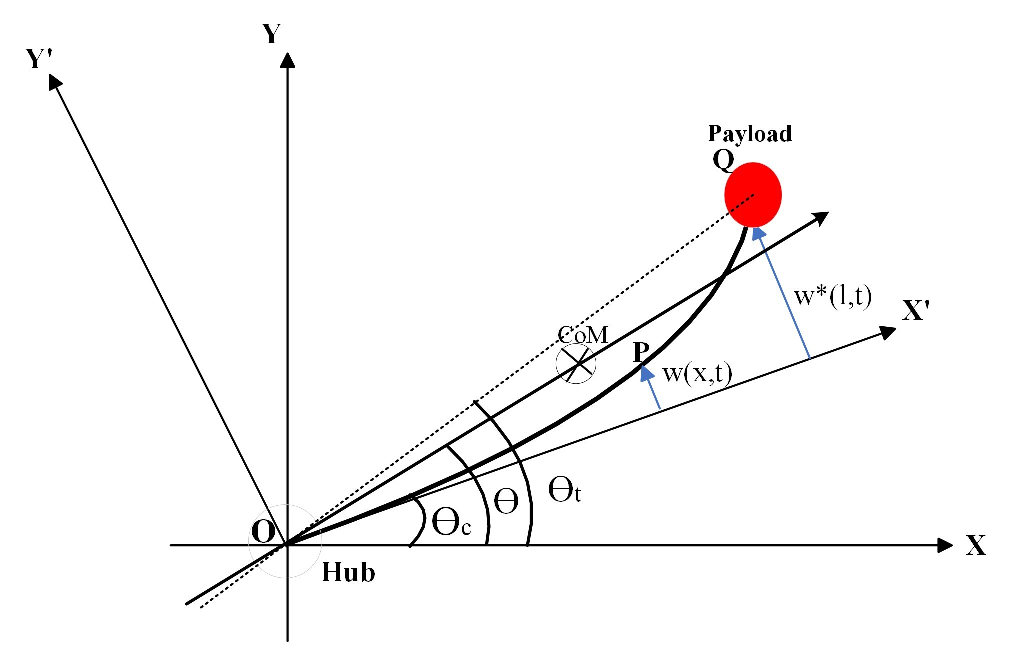}
    \caption{Schematic of Single Link Flexible Manipulator}
    \label{fig:slfm}
\end{figure}

The SLFM is modelled as an Euler-Bernoulli beam with an area moment of inertia $I (m^4)$ and Young's modulus $E (N/m^2)$. The manipulator's payload has a mass $M_p (kg)$ and an inertia $I_p (kg-m^2)$. A motor with inertia $I_h(kg-m^2)$ applies a torque $\tau (N-m)$ to the base.

By utilising Hamilton's principle and the assumed mode method \cite{de2001rest, sharma2024}, the dynamic equations of motion can be expressed in matrix form as:
\begin{subequations}\label{matrix_form}
\begin{align}
       J_t \ddot{\theta}(t) &= \tau(t) \label{rigid_dynamics} \\
     \ddot{p_j}(t) + 2\zeta \omega_j \dot{p_j}(t) + \omega_j^2 p_j(t) &= {\phi}_j'(0) \tau(t) \label{flexible_dynamics} 
\end{align}    
 \end{subequations}
In the above, $J_t$ denotes the total effective rigid-body inertia seen at the hub (including the motor inertia 
$I_h$, payload inertia contribution $I_p$, and the equivalent inertia of the link). Parameters $\omega_j$ and $\zeta$ represent the natural frequency and modal damping ratio of the $j$-th flexible mode, respectively.
 
The angular deviation of the line from the $x$-axis that connects the base to the centre of mass (CoM) is denoted by $'\theta(t)'$. The vibrational behaviour of the system is captured by oscillatory modes $p_j(t)$, where each $p_j(t)$ corresponds to the $j$-th oscillatory mode.

In the conventional state-space framework, the dynamic model represented by equation \eqref{matrix_form} can be represented as:
\begin{subequations} \label{SYSTEM}
\begin{align}
\dot{\psi}(t) &= \mathcal{A} \psi(t) + \mathcal{B} u(t) + h(\psi,t)\label{state_space_model} \\
g(t) &= \mathcal{C}\psi(t) \label{output}
\end{align}
\end{subequations}
\begin{align*}
\psi(t) &= \begin{bmatrix}
    \theta(t) &
    \dot{\theta}(t) &
    p_1(t) & 
    \dot{p}_1(t) &
    \ldots &
    p_n(t) &
    \dot{p}_n(t)
\end{bmatrix}^T \\ 
g(t) &= \begin{bmatrix}
    \theta_t(t) &
    \theta_c(t)
\end{bmatrix}^T
\end{align*}
where $\psi(t) \in \mathbb{R}^{m} \hspace{0.1cm} (m = 2n+2)$ represents internal states of the system, $\mathcal{A} \in \mathbb{R}^{m\times m}$, $ \mathcal{B} \in \mathbb{R}^{m\times 1}$, $\mathcal{C} \in \mathbb{R}^{2 \times m}$ are the system matrices represented below, the lumped matched uncertainties due to parameter variations, unmodelled dynamics, and external disruptions are represented by $h(\psi,t) \in \mathbb{R}^{m}$, $g(t) \in \mathbb{R}^{2}$ is the output of the system, $u(t) \in \mathbb{R}$ denotes the input to the system, and $\theta_t(t)$ and $\theta_c(t)$ are the tip position angle and clamped joint angle, respectively.

The measurable outputs are the tip pointing angle $\theta_t(t)$ and the clamped-joint angle $\theta_c(t)$. Under the small-deflection assumption, the tip angle is approximated by the small-angle relation
\begin{align*}
    \theta_t(t) = \theta(t) + \frac{w(l,t)}{l}
\end{align*}
while the clamped-joint angle corresponds to the base slope,
\begin{align*}
    \theta_c(t) = \theta(t) + w(0,t)
\end{align*}
These relations determine the structure of the output matrix $\mathcal{C}$.
\begin{align*}
\mathcal{A} &= \begin{bmatrix}
0 & 1 & 0 & 0 & \cdots & 0 & 0 & 0 & 0 \\
0 & 0 & 0 & 0 & \cdots & 0 & 0 & 0 & 0\\
0 & 0 & 0 & 1 & \cdots & 0 & 0 & 0 & 0\\
0 & 0 & -\omega_1^2 & -2\zeta\omega_1 & \cdots & 0 & 0 & 0 & 0 \\
\vdots & \vdots & \vdots & \vdots & \ddots & \vdots & \vdots & \vdots & \vdots \\
0 & 0 & 0 & 0 & \cdots & 0 & 0 & 0 & 1 \\
0 & 0 & 0 & 0 & \cdots & 0 & 0 & -\omega_n^2 & -2\zeta\omega_n
\end{bmatrix} \\
\mathcal{B} &= \begin{bmatrix}
0 & \frac{1}{J_t} & 0 & \phi_1'(0) & \cdots & 0 & \phi_n'(0)
\end{bmatrix}^T  \\ 
\mathcal{C} &= \begin{bmatrix}
1 & 0 & \frac{\phi_1(l)}{l} & 0 & \cdots & \frac{\phi_n(l)}{l} & 0 \\
1 & 0 & \phi_1'(0) & 0 & \cdots & \phi_n'(0) & 0
\end{bmatrix} 
\end{align*}
The following are assumed to be true while designing the control input for the system in \eqref{SYSTEM}.
\begin{assumption}
The lumped uncertainties and disturbances are assumed to be matched with the system input. Thus, there exists a vector $\xi(\psi,t) \in \mathbb{R}$ such that the uncertainty term $h(\psi,t) \in \mathbb{R}^{m}$ can be expressed as $\mathcal{B}\xi(\psi,t) \in \mathbb{R}^{m}$. \label{assm3}
\end{assumption}
\begin{assumption}
The magnitude of the uncertainty vector $\xi(\psi,t)$ is bounded by a known constant $\Bar{\xi}$. That is $||\xi(\psi,t)||\leq \Bar{\xi}, \forall t\ge 0$ \label{assm4}
\end{assumption}

\begin{remark}
Assumptions~\ref{assm3}–\ref{assm4} are standard in robust and sliding mode control literature. The matched uncertainty condition ensures that disturbances enter through the same channel as the control input, allowing effective compensation. The boundedness of the uncertainty vector enables proper selection of control gains and facilitates finite-time stability analysis. These assumptions are practically reasonable and often validated through prior modelling or experimental estimation.
\end{remark}

\subsection{Control Objective}
The control objective is to design a robust controller that ensures the tip position of the flexible link, $\theta_t(t)$, faithfully achieves a predefined desired position, $\theta_d$, in a fixed amount of time. Mathematically, this can be expressed as minimising the tracking error $\Tilde{\theta}(t)$, defined as the difference between the actual and desired tip positions, such that
\begin{align}
     \Tilde{\theta}(t) = \theta_t(t) - \theta_d = 0 \hspace{0.25cm}\forall \hspace{0.25cm} t \geq T \label{relative_error}
\end{align}
To ensure that the controller focuses on achieving the desired tip position without unintentionally exciting the other vibratory modes, the desired position in the state-space coordinates, $\psi_d(t)$, is chosen as:
\begin{align*}
    \psi_d(t) = \begin{bmatrix}
        \theta_d & 0 & 0 & 0 & \cdots & 0 & 0 
    \end{bmatrix}
\end{align*}

\subsection{State Transformation for Fixed-Time SMC Based Controller Design}
In order to facilitate the design of a fixed-time sliding mode controller with guaranteed convergence properties, the original system in \eqref{SYSTEM} is first transformed into its controllable canonical form (or chain of integrators representation) \cite{chen1984linear}. This transformation explicitly exposes the controllability structure of the system and renders the input–state relationship in a form that is particularly convenient for fixed-time and terminal sliding mode control design. In this canonical representation, the control input directly affects the highest-order state, while the remaining states evolve as a chain of successive integrators. This property significantly simplifies both the construction of the sliding manifold and the subsequent Lyapunov-based stability analysis.

Applying the controllable canonical transformation $\mathbf{z}(t) = \mathcal{T}\psi(t)$, where $\mathcal{T}$ is the transformation matrix, the system can be rewritten in the transformed form as:
\begin{subequations} \label{transformed_system}
    \begin{align}
        \dot{\mathbf{z}}(t) &= \mathcal{A}_C \mathbf{z}(t) + \mathcal{B}_Cu + \delta(\mathbf{z},t) \\
        g(t) &= \mathcal{C}_C\mathbf{z}(t)
    \end{align}
\end{subequations}
where $\mathbf{z}(t)$ is the transformed state of the system, $\mathcal{A}_C = \mathcal{T} \mathcal{A} \mathcal{T}^{-1}$, $\mathcal{B}_C = \mathcal{T}\mathcal{B}$ and $\mathcal{C}_C = \mathcal{C}\mathcal{T}^{-1}$ are the transformed system matrices.
\begin{align*}
\mathcal{A}_C &= \begin{bmatrix}
    0 & 1 & 0 & \cdots & 0 \\
    0 & 0 & 1 & \cdots & 0 \\
    \vdots & \vdots & \vdots & \ddots & \vdots \\
    0 & 0 & 0 & \cdots & 1 \\
    -f_0 & -f_1 & -f_2 & \cdots & -f_{m-1}
\end{bmatrix} \\
\mathcal{B}_C &= \begin{bmatrix}
    0 & 0 & 0 & \cdots & 1
\end{bmatrix}^T
\end{align*}
where the coefficients $f_0, f_1, \ldots, f_{m-1}$ correspond to the characteristic polynomial of the system. In this form, the dynamics explicitly appear as a chain of integrators driven by the control input $u(t)$.

The desired position $\mathbf{z}_d$, in the transformed coordinates can be expressed as follows:
\begin{align}
  \mathbf{z}_d &= \mathcal{T}\psi_d  \label{transformed_desired_trajectory}
\end{align}
where $\psi_d$ represents the desired position in the original coordinates.

This coordinate transformation not only simplifies the synthesis of fixed-time sliding mode control but also provides a unified framework for developing non-singular, analytically verifiable proofs of fixed-time convergence. By expressing the system in controllable canonical form, the controller design can exploit the structure of integrator-type dynamics to ensure fixed-time reaching and fixed-time stability with minimal structural assumptions on the original system matrices.

\section{Fixed Time Composite Controller Design}
\subsection{Nested Non-Singular Terminal Sliding Mode Controller} \label{sec:controller}
It has been well established that introducing non-smooth dynamics can significantly enhance system performance \cite{yu2020terminal}. For example, the incorporation of a terminal attractor, a first-order dynamic with a fractional power term, into neural learning rules has been shown to accelerate adaptation and achieve finite/fixed time convergence, as demonstrated in \cite{yu2020terminal}. Based on this, we propose a nested non-singular terminal sliding mode (NNTSM)-based control design for the SLFM system to ensure fixed-time convergence of the tip to its desired position.

To rigorously establish the fixed-time stability properties of the proposed approach, the following subsection presents the fundamental fixed-time stability lemma, which serves as the theoretical basis for the subsequent controller design and analysis.

\begin{lemma}(\textit{Fixed Time Lemma} \cite{polyakov2011nonlinear})\label{fixed_time_lemma}
Consider a continuous autonomous system:
\begin{align}
    \dot{x}(t) = f(x), \hspace{0.5cm} x(0) = 0, \hspace{0.5cm} f(0) = 0
\end{align}
Assume there exist a Lyapunov function $V(x):\mathbb{R}^n \rightarrow \mathbb{R}^+ \cup \{0\}$ such that $\dot{V} \leq -\eta V^{\lambda_1}(x) - \upsilon V^{\lambda_2}(x)$ holds for positive constants $\eta,~\upsilon,~\lambda_1 \in (0,1)$, and $\lambda_2 > 1$.

Then, the origin is globally fixed-time stable, and the settling time $T_s$ of the state x(t) is bounded by:
\begin{align}
T_s < T_{max.} = \frac{1}{\eta(1-\lambda_1)} + \frac{1}{\nu (\lambda_2 - 1)}
\end{align}
\end{lemma}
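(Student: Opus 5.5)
The plan is a comparison argument on the scalar function $v(t)=V(x(t))$ along any trajectory, starting from an arbitrary initial state $x(0)=x_0$; the condition $x(0)=0$ in the statement is read as the equilibrium condition $f(0)=0$. Since $V$ is positive definite and radially unbounded, it suffices to show that $v$ reaches zero within the stated time and then stays at zero. Once $v=0$, the inequality $\dot v\le 0$ together with $v\ge 0$ keeps $v$ at zero, so $x\equiv 0$ from then on. Lyapunov stability of the origin follows directly from $\dot V<0$ away from the origin.

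The main step splits the evolution into two phases according to whether $v\ge 1$ or $v\le 1$.

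\textbf{Phase 1 ($v\ge 1$).} Here we drop the first term and use $\dot v\le -\upsilon v^{\lambda_2}$. This gives
\begin{align*}
\frac{d}{dt}\, v^{1-\lambda_2}=(1-\lambda_2)\,v^{-\lambda_2}\dot v\ge \upsilon(\lambda_2-1).
\end{align*}
Since $v^{1-\lambda_2}$ starts at a value in $(0,1]$ when $v(0)\ge 1$, it reaches $1$ (that is, $v$ reaches $1$) within time $\frac{1-v(0)^{1-\lambda_2}}{\upsilon(\lambda_2-1)}<\frac{1}{\upsilon(\lambda_2-1)}$. This bound holds uniformly in $x_0$, even as $v(0)\to\infty$, and this is exactly where $\lambda_2>1$ enters. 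If $v(0)<1$, this phase takes zero time.

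\textbf{Phase 2 ($v\le 1$).} Here we drop the second term and use $\dot v\le-\eta v^{\lambda_1}$. Then
\begin{align*}
\frac{d}{dt}\, v^{1-\lambda_1}\le -\eta(1-\lambda_1)
\end{align*}
while $v>0$, so $v^{1-\lambda_1}$ decreases from at most $1$ to $0$ within time $\frac{1}{\eta(1-\lambda_1)}$. Because $v$ is nonincreasing, it never re-enters the region $v>1$. Adding the two phase times gives $T_s\le \frac{1}{\eta(1-\lambda_1)}+\frac{1}{\upsilon(\lambda_2-1)}$, which is the bound in the statement once the $\nu$ there is identified with $\upsilon$.

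The main obstacle is rigor in the comparison step. Differentiating $v^{1-\lambda}$ is only legitimate where $v>0$. Solutions of a merely continuous right-hand side may be non-unique, and $v$ may be only absolutely continuous. I would handle this by applying the differential inequality on maximal intervals where $v>0$, and by invoking a standard comparison lemma for absolutely continuous functions so that the argument applies to every solution.

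The strict inequality $T_s<T_{\max}$ then comes from the strictness in phase 1 when $v(0)>1$. When $v(0)\le 1$, phase 1 contributes nothing, so the total is at most $\frac{1}{\eta(1-\lambda_1)}$, which is already strictly below $T_{\max}$. The case $v(0)=1$ falls into this second case.
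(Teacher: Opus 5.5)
Your proposal is correct and is the standard proof: the paper gives no proof of this lemma and only cites Polyakov, whose argument is exactly this split at $V=1$. There, $\dot V\le-\upsilon V^{\lambda_2}$ brings $V$ into $\{V\le 1\}$ within $\frac{1}{\upsilon(\lambda_2-1)}$, and $\dot V\le-\eta V^{\lambda_1}$ then brings it to zero within $\frac{1}{\eta(1-\lambda_1)}$. The one point to make explicit is where radial unboundedness enters; it is missing from the paper's statement, as is positive definiteness, but both are needed. Since $V(x(t))$ is nonincreasing, each trajectory stays in the compact sublevel set $\{V\le V(x_0)\}$, so it exists on all of $[0,\infty)$ and cannot escape before $V$ reaches zero.
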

\subsubsection{Preliminaries and Control Law Design}
For controller design, only the first vibratory mode is considered in this paper. Consequently, the resulting state space model is of fourth order and can be represented as follows:
\begin{equation}\label{eq:system}
\dot{z}_1=z_2,\quad 
\dot{z}_2=z_3,\quad 
\dot{z}_3=z_4,\quad 
\dot{z}_4=f(z)+u+\delta(t,z),
\end{equation}
where $z=[z_1,z_2,z_3,z_4]^\top\!\in\!\mathbb{R}^4$, the smooth nonlinear function $f(z)$ is assumed known (to be estimated later by an observer), and the disturbance $\delta(t,z)$ is bounded by $|\delta(t,z)|\le \bar d$.  
Let the desired position $z_{d_1}$ and its derivatives be bounded, and define the position error $e = z_1 - z_{d_1}$.

\noindent\textbf{Definition 1 (Non-singular auxiliary function).}\label{def:1}

To avoid singularities, we employ the smooth sub-unity map
\begin{equation}
\phi_\varepsilon(x) := x(x^2+\varepsilon^2)^{\frac{\gamma_1-1}{2}}, 
\qquad 0<\gamma_1<1,\ \varepsilon>0,
\label{eq:phi_eps}
\end{equation}
which is $C^1$ for all $x$ and satisfies, for all $x\in\mathbb{R}$,
\begin{equation}
\varepsilon^{\gamma_1-1}x^2 
\;\le\;
x\phi_\varepsilon(x)
\;\le\;
|x|^{\gamma_1+1}.
\label{eq:phi_bounds}
\end{equation}
Thus $\phi_\varepsilon(x)$ behaves like $|x|^{\gamma_1}\mathrm{sgn}(x)$ away from the origin and is linear near $x=0$.

Fix $\gamma_2>1$ and choose positive design gains $\alpha_i>0$, $\kappa_{i1}>0$, $\kappa_{i2}>0$ for $i=0,1,2$. 

\noindent\textbf{Definition 2 (Sliding surfaces).}
Let $\alpha_i,\kappa_{i1},\kappa_{i2}>0$ and $0<\gamma_1<1<\gamma_2$.
Define the recursive surfaces
\begin{align}\label{eq:surfaces}
s_0 &= e, \nonumber\\
s_{i+1} &= \dot{s}_i + \alpha_i s_i
          + \kappa_{i1}\phi_\varepsilon(s_i)
          + \kappa_{i2}|s_i|^{\gamma_2}\mathrm{sgn}(s_i),
          \quad i=0,1,2.
\end{align}

Differentiation yields the dynamics
\begin{align}
\dot{s}_i &= -\alpha_i s_i - \kappa_{i1}\phi_\varepsilon(s_i) - \kappa_{i2}|s_i|^{\gamma_2}\mathrm{sgn}(s_i) + s_{i+1},~ i=0,1,2.\label{eq:sidot}
\\
\dot{s}_3 &= f(z)+u+\Phi(t,z)+\delta(t,z). \label{eq:s3dot} 
\end{align}
where $\Phi(t,z)$ collects known feed-forward and reference‐trajectory terms.

The control input is designed as
\begin{multline}\label{eq:control}
u = -f(z) - \Phi(t,z)
    -c_1|s_3|^{p}\mathrm{sgn}(s_3) \\
    -c_2|s_3|^{q}\mathrm{sgn}(s_3) 
    -\eta\,\mathrm{sgn}(s_3),    
\end{multline}
with constants $c_1,c_2,\eta>0$ and exponents $0<p<1<q$, $\eta>\bar d$.

\subsubsection{Main Result}

\begin{theorem}\label{thm:fixed_time_thm}
Consider the transformed SLFM dynamics in \eqref{transformed_system} and the sliding variables in \eqref{eq:sidot}. Under the control law \eqref{eq:control}, the closed-loop system is globally practically fixed-time stable. That is, there exist constants $T_{\mathrm{ctrl}}>0$ and $\varepsilon_i>0$, $i=0,1,2,3$, independent of the initial conditions, such that
\begin{align}
|s_i(t)| \le \varepsilon_i, \qquad i=0,1,2,3, \qquad \forall t \ge T_{\mathrm{ctrl}}.
\end{align}
In particular,
\begin{align}
|e(t)| \le \varepsilon_0, \qquad \forall t \ge T_{\mathrm{ctrl}},
\end{align}
where $e(t)=z_1(t)-z_{d1}(t)$. Furthermore, $T_{\mathrm{ctrl}} \le T_0+T_1+T_2+T_3,$ with $T_3 \le \frac{1}{c_1(1-p)}+\frac{1}{c_2(q-1)},$ and
\begin{align}
T_i &\le \frac{1}{\underline{\kappa}_{i1}(1-\gamma_1)}+\frac{1}{\underline{\kappa}_{i2}(\gamma_2-1)}, \qquad i=0,1,2.
\end{align}
The constants $\varepsilon_i$ can be made arbitrarily small by appropriate choice of the controller parameters, particularly by reducing the regularization parameter $\varepsilon$.
\end{theorem}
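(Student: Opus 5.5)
The argument runs backward through the cascade: first $s_3$, then $s_2,s_1,s_0$ in turn. Each step applies Lemma~\ref{fixed_time_lemma} (or a perturbed version of it) to a scalar Lyapunov function $V_i=\tfrac12 s_i^2$.

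\textbf{Step 1: the outer surface $s_3$.} Substituting \eqref{eq:control} into \eqref{eq:s3dot} gives
\[
\dot s_3=-c_1|s_3|^p\mathrm{sgn}(s_3)-c_2|s_3|^q\mathrm{sgn}(s_3)-\eta\,\mathrm{sgn}(s_3)+\delta .
\]
With $V_3=\tfrac12 s_3^2$ and $\eta>\bar d$, the disturbance is dominated, so
\[
\dot V_3\le -c_1|s_3|^{p+1}-c_2|s_3|^{q+1}=-c_1 2^{\frac{p+1}{2}}V_3^{\frac{p+1}{2}}-c_2 2^{\frac{q+1}{2}}V_3^{\frac{q+1}{2}} .
\]
Lemma~\ref{fixed_time_lemma} then gives $s_3\equiv0$ after a time bounded uniformly in the initial data. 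To reach the stated form $T_3\le \frac{1}{c_1(1-p)}+\frac{1}{c_2(q-1)}$, I would not use $V_3$ directly. Instead I would apply the lemma to $|s_3|$ itself, whose derivative is $\frac{d}{dt}|s_3|\le -c_1|s_3|^p-c_2|s_3|^q$. Splitting the trajectory at $|s_3|=1$, the time to go from any value down to $1$ is at most $\frac{1}{c_2(q-1)}$, and the time from $1$ to $0$ is at most $\frac{1}{c_1(1-p)}$. So $\varepsilon_3=0$ in fact, and $T_{\mathrm{ctrl}}$ is controlled by $T_0,T_1,T_2$ together with $T_3$.

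\textbf{Step 2: the inner surfaces, by backward induction for $i=2,1,0$.} Suppose that for $t\ge \sum_{j>i}T_j$ we have $|s_{i+1}|\le\varepsilon_{i+1}$. For $i=2$ this holds with $\varepsilon_3=0$. Multiplying \eqref{eq:sidot} by $\mathrm{sgn}(s_i)$ and using the bounds \eqref{eq:phi_bounds} gives
\[
\frac{d}{dt}|s_i|\le-\alpha_i|s_i|-\kappa_{i1}|s_i|^{-1}s_i\phi_\varepsilon(s_i)-\kappa_{i2}|s_i|^{\gamma_2}+\varepsilon_{i+1}.
\]
The difficulty is that $\phi_\varepsilon$ is only linear near the origin, so the $|s_i|^{\gamma_1}$ decay is lost there. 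I would handle this as follows. For $|s_i|\ge\varepsilon$, we have $(s_i^2+\varepsilon^2)^{(\gamma_1-1)/2}\ge (2s_i^2)^{(\gamma_1-1)/2}$, hence
\[
s_i\phi_\varepsilon(s_i)\ge 2^{\frac{\gamma_1-1}{2}}|s_i|^{\gamma_1+1}.
\]
Define $\underline\kappa_{i1}$ as $2^{(\gamma_1-1)/2}\kappa_{i1}$, reduced by a fraction $\theta\in(0,1)$ that is reserved to absorb $\varepsilon_{i+1}$. Define $\underline\kappa_{i2}$ as $\kappa_{i2}$ reduced in the same way. Outside the ball
\[
|s_i|\le \varepsilon_i:=\max\bigl\{\varepsilon,\ (\varepsilon_{i+1}/(\theta\kappa_{i1}2^{(\gamma_1-1)/2}))^{1/\gamma_1}\bigr\},
\]
we then get $\frac{d}{dt}|s_i|\le-\underline\kappa_{i1}|s_i|^{\gamma_1}-\underline\kappa_{i2}|s_i|^{\gamma_2}$. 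The same splitting at $|s_i|=1$ as in Step 1 shows the ball is reached within $T_i\le\frac{1}{\underline\kappa_{i1}(1-\gamma_1)}+\frac{1}{\underline\kappa_{i2}(\gamma_2-1)}$. The ball is positively invariant, because $\frac{d}{dt}|s_i|<0$ on its boundary.

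\textbf{Step 3: finite escape and the sum of times.} Before each inner reaching phase starts, I must rule out finite escape of $s_i$ while the outer variables are still converging. On $[0,\sum_{j>i}T_j]$ the outer variable $s_{i+1}$ stays bounded, since $|s_{i+1}|$ is nonincreasing outside its invariant ball. Equation \eqref{eq:sidot} is then a scalar ODE with a bounded input and a dissipative right-hand side, so $|s_i|$ stays finite. Because the reaching-time estimate is uniform in the starting value, only the total time matters. Summing gives $T_{\mathrm{ctrl}}\le T_0+T_1+T_2+T_3$, and $|e|=|s_0|\le\varepsilon_0$ follows. The $\varepsilon_i$ shrink as $\varepsilon\to0$ (with $\varepsilon_3=0$ propagating inward), which gives the arbitrary-smallness claim.

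The main obstacle is Step 2: making the perturbation by $s_{i+1}$ and the linear regime of $\phi_\varepsilon$ compatible with clean constants $\underline\kappa_{i1},\underline\kappa_{i2}$. I would settle this by fixing the fraction $\theta$ and defining $\varepsilon_i$ recursively exactly as above.
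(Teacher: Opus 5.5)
Your proof is correct and has the same skeleton as the paper's: a backward cascade $s_3\to s_2\to s_1\to s_0$, a scalar Lyapunov estimate at each level, and the settling times summed. The technical core, however, is different.

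The paper uses $V_i=\tfrac12 s_i^2$ throughout. For the inner surfaces it asserts a global bound $s_i\phi_\varepsilon(s_i)\ge c_\varepsilon|s_i|^{\gamma_1+1}$, uses Young's inequality to reach $\dot V_i\le -a_iV_i^{(\gamma_1+1)/2}-b_iV_i^{(\gamma_2+1)/2}+\sigma_i$, and then appeals to a practical fixed-time result that it never states.

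You instead work with $|s_i|$ and integrate directly, splitting at $|s_i|=1$. You absorb $s_{i+1}$ into a fraction $\theta$ of the gain outside an explicit ball. This buys several things:
\begin{itemize}
\item It gives exactly the constants in the statement. Lemma~\ref{fixed_time_lemma} applied to $V_3=\tfrac12 s_3^2$ yields at best $\frac{2^{(1-p)/2}}{c_1(1-p)}+\frac{2^{(1-q)/2}}{c_2(q-1)}$, which does not imply the stated bound.
\item It gives concrete definitions of $\underline\kappa_{i1}$, $\underline\kappa_{i2}$ and $\varepsilon_i$.
\item It makes positive invariance of the balls explicit and includes a no-escape check, neither of which the paper supplies.
\item Your restriction to $|s_i|\ge\varepsilon$ is genuinely necessary. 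Since $s\phi_\varepsilon(s)/|s|^{\gamma_1+1}=\bigl(s^2/(s^2+\varepsilon^2)\bigr)^{(1-\gamma_1)/2}\to0$ as $s\to0$, no $c_\varepsilon>0$ works uniformly near the origin. The paper's step only goes through once the region $|s_i|<\varepsilon$ is absorbed into $\sigma_i$.
\end{itemize}
The paper's route is shorter and keeps the quadratic-Lyapunov format used for the observer.

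Two small repairs are needed in your write-up:
\begin{itemize}
\item Drop the citation of \eqref{eq:phi_bounds} in Step 2. Its left inequality is reversed; in fact $x\phi_\varepsilon(x)\le\varepsilon^{\gamma_1-1}x^2$. Your displayed inequality is an identity up to the bound on $s_{i+1}$, and the lower bound you actually use is derived independently for $|s_i|\ge\varepsilon$.
\item In Step 3, $|s_{i+1}|$ is monotone outside its ball only once $s_{i+2}$ has entered its own ball. So for $i+1\le 2$, obtain boundedness on $[0,\sum_{j>i}T_j]$ by running your dissipativity argument inductively from $s_3$ inward, for instance via $\frac{d}{dt}|s_i|\le-\kappa_{i2}|s_i|^{\gamma_2}+|s_{i+1}|$.
\end{itemize}
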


\proof
The proof follows a recursive fixed-time analysis of the surfaces
$s_3\to s_2\to s_1\to s_0$.

\medskip
\noindent
\textit{Step~1 (Surface $s_3$):}
Substituting~\eqref{eq:control} into~\eqref{eq:s3dot} gives
\[
\dot{s}_3
= -c_1|s_3|^{p}\mathrm{sgn}(s_3)
  -c_2|s_3|^{q}\mathrm{sgn}(s_3)
  +(\delta-\eta\,\mathrm{sgn}(s_3)).
\]
Since $|\delta|\le\bar d<\eta$, it follows that
\[
s_3\dot{s}_3 \le -c_1|s_3|^{p}-c_2|s_3|^{q}.
\]
Let $V_3=\tfrac12 s_3^2$. Then,
\[
\dot V_3 \le -a_3V_3^{\alpha_3}-b_3V_3^{\beta_3},
\]
where $a_3=c_1 2^{-\frac{p+1}{2}}$, $b_3=c_2 2^{-\frac{q+1}{2}}$,
$\alpha_3=\tfrac{p+1}{2}\!\in(0,1)$, and $\beta_3=\tfrac{q+1}{2}\!>\!1$.
By the fixed-time stability lemma,
\[
T_3 \le \tfrac{1}{a_3(1-\alpha_3)}+\tfrac{1}{b_3(\beta_3-1)}.
\]
Hence,
\[
s_3(t)=0,\qquad \forall~ t\ge T_3.
\]

\medskip
\noindent
\textit{Step 2: Practical fixed-time convergence of $s_i$, $i=0,1,2$.}
Once $s_{i+1}$ has converged to a sufficiently small neighbourhood of the origin, the dynamics of $s_i$ can be expressed as
\begin{align}
\dot{s}_i=-\alpha_i s_i-\kappa_{i1}\phi_\varepsilon(s_i)-\kappa_{i2}|s_i|^{\gamma_2}\operatorname{sgn}(s_i)+d_i(t),
\end{align}
where $|d_i(t)|\le \bar d_i$ after a fixed time. For the Lyapunov function $V_i=\frac{1}{2}s_i^2$, one has
\begin{align}
\dot{V}_i
=-\alpha_i s_i^2-\kappa_{i1}s_i\phi_\varepsilon(s_i)-\kappa_{i2}|s_i|^{\gamma_2+1}+s_i d_i(t).
\end{align}
Since $s_i\phi_\varepsilon(s_i)\ge c_\varepsilon |s_i|^{\gamma_1+1}$ for some $c_\varepsilon>0$, Young's inequality gives
\begin{align}
\dot{V}_i \le -a_i V_i^{\frac{\gamma_1+1}{2}}-b_i V_i^{\frac{\gamma_2+1}{2}}+\sigma_i,
\end{align}
with $a_i,b_i>0$ and $\sigma_i\ge 0$. Therefore, $s_i(t)$ reaches the residual set
\begin{align}
\Omega_i=\{\,s_i\in\mathbb{R}:|s_i|\le \varepsilon_i\,\}
\end{align}
in a fixed time $T_i$ independent of the initial conditions.
\medskip
\noindent
\textit{Step~3 (Total settling time)}:
By recursive application, all surfaces $s_{i}(t)$, $i=0,1,2,3$, converge to their respective residual sets in fixed time. In particular, the tracking error $e(t) = s_0$ satisfies $|e(t)| \leq \varepsilon$ after a fixed time. The overall settling time is bounded as fixed time,
\begin{align}\label{eq:T_settling_control}
T_{\mathrm{controller}} \leq \sum_{i=0}^{3} T_i
\end{align}
which is independent of initial conditions. Moreover, the residual sets can be made arbitrarily small by choosing $\varepsilon>0$ sufficiently small. This completes the proof.
\endproof

We can perform a generalised proof of this theorem for a larger number of vibratory modes, considering an appropriate number of nested sliding functions.
\vspace{-0.15cm}
\subsection{Fixed-Time Sliding-Mode Observer}\label{subsec:ft_smo} 
The fixed-time controller in Sec.~\ref {sec:controller} requires knowledge of the transformed system states $z(t)$ obtained from the first vibratory mode model. However, the available measurements originate from the actual two-mode plant. Let the plant output be
\begin{equation}
    y = C_2 z_2 = C_r z + \Delta y,
    \label{eq:two_mode_output}
\end{equation}
where $\Delta y$ denotes the residual contribution of the higher vibratory mode, bounded as $\|\Delta y\|\le \bar{\Delta}$.

Our objective is to reconstruct the reduced transformed state $z(t)$ from the available measurement $y(t)$ and input $u(t)$, such that the controller in Sec.~\ref{sec:controller} can employ $\hat{z}(t)$ with guaranteed fixed-time convergence of the estimation error.

\begin{theorem}\label{theorem_2}
Consider an observer of the following form for the system in \eqref{transformed_system}.
\begin{multline}\label{observer_dynamics}
       \dot{\hat{\mathbf{z}}} = \mathcal{A}_C \hat{\mathbf{z}}(t) + \mathcal{B}_C u(t) + \mathcal{L} (g - C_c \hat{z}(t)) \\+ \mathcal{K}_1 {\lfloor \Tilde{e}_y \rceil^{\mu_1}}
       + \mathcal{K}_2  {\lfloor \Tilde{e}_y \rceil^{\mu_2}}  
\end{multline} 
where ${\lfloor \Tilde{e}_y \rceil^{\mu_i}} = \begin{bmatrix}
       |\Tilde{e}_{y_1}|^{\mu_i} \text{sgn}(\Tilde{e}_{y_1}) \\
       |\Tilde{e}_{y_2}|^{\mu_i} \text{sgn}(\Tilde{e}_{y_2}) 
   \end{bmatrix}$, for $(i= 1,2)$, with $\mathcal{K}_1 \in \mathbb{R}^{m\times 2}$, $\mathcal{K}_2 \in \mathbb{R}^{m\times 2} $,  constant observer gain matrices of compatible dimensions, chosen such that the nonlinear output-injection terms dominate the estimation error dynamics, $0<\mu_1<1$ and $\mu_2>1$, $\mathcal{L}$ is a design constant, and $\Tilde{e}_y = g - \mathcal{C}_C\hat{\mathbf{z}}$ is the output estimation error. The observer in (\ref{observer_dynamics}) will take the estimation error $\Tilde{e} = \mathbf{z} - \hat{\mathbf{z}}$ to the origin in a fixed amount of time.
\end{theorem}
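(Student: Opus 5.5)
The plan is to write the estimation error dynamics and reduce them to the scalar fixed-time inequality of Lemma~\ref{fixed_time_lemma}. Subtracting \eqref{observer_dynamics} from \eqref{transformed_system} gives
\begin{align*}
\dot{\Tilde{e}} = (\mathcal{A}_C-\mathcal{L}\mathcal{C}_C)\Tilde{e} - \mathcal{K}_1\lfloor \Tilde{e}_y\rceil^{\mu_1} - \mathcal{K}_2\lfloor \Tilde{e}_y\rceil^{\mu_2} + \delta(\mathbf{z},t),
\end{align*}
with $\Tilde{e}_y=\mathcal{C}_C\Tilde{e}$. The first modelling residual $\Delta y$ is deferred; I treat it at the end as a bounded perturbation that leaves a residual set.

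\textbf{Step 1: choice of $\mathcal{L}$ and Lyapunov function.} Since $(\mathcal{A}_C,\mathcal{C}_C)$ is observable for the one-mode model, I choose $\mathcal{L}$ so that $A_L:=\mathcal{A}_C-\mathcal{L}\mathcal{C}_C$ is Hurwitz. I then take $P=P^\top>0$ solving $A_L^\top P+PA_L=-Q$. I also impose the structural condition $P\mathcal{K}_i=\mathcal{C}_C^\top\Gamma_i$ with $\Gamma_i>0$ diagonal, $i=1,2$. This is the usual output-injection matching condition, and it is where the phrase ``gains chosen such that the injection terms dominate'' is made precise. With $V=\Tilde{e}^\top P\Tilde{e}$, the nonlinear terms then give
\begin{align*}
-2\Tilde{e}^\top P\mathcal{K}_i\lfloor\Tilde{e}_y\rceil^{\mu_i} = -2\textstyle\sum_j \gamma_{ij}|\Tilde{e}_{y_j}|^{\mu_i+1}.
\end{align*}
The disturbance term $2\Tilde{e}^\top P\delta$ is handled by Assumption~\ref{assm4} via $\delta=\mathcal{B}_C\xi$. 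Under the same matching, $P\mathcal{B}_C$ lies in the range of $\mathcal{C}_C^\top$, so this term is bounded by $c\,\bar\xi\,\|\Tilde{e}_y\|$. It is then dominated by the sublinear injection when $\Gamma_1$ is large; alternatively, a sign term could be added to the $\mu_1$ channel.

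\textbf{Step 2: reaching the fixed-time inequality.} The obstacle is that the injection terms are powers of $\Tilde{e}_y$, not of $\Tilde{e}$. They therefore do not by themselves yield $\dot V\le -\eta V^{(\mu_1+1)/2}-\upsilon V^{(\mu_2+1)/2}$ in the full state. I would attempt this in two stages. First, I would consider the output-error dynamics in the two measured directions. Because the canonical form is a chain of integrators, $\mathcal{C}_C$ picks out combinations of $z_1,z_3$ (or of the corresponding lowest states). Writing $\dot{\Tilde{e}}_y=\mathcal{C}_C\dot{\Tilde{e}}$, the Lyapunov function $\tfrac12\Tilde{e}_y^\top\Tilde{e}_y$ then satisfies the fixed-time inequality once $\Gamma_1$ dominates the bounded cross terms. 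This gives $\Tilde{e}_y\equiv0$ after a time $T_{o,1}$ computed from Lemma~\ref{fixed_time_lemma}. On the sliding set $\Tilde{e}_y=0$, the equivalent injection recovers the unmeasured derivative components, in the spirit of the recursive argument of Theorem~\ref{thm:fixed_time_thm}. A second cascaded level of homogeneous terms (stacked in $\mathcal{K}_1,\mathcal{K}_2$) then drives the remaining components in a further fixed time $T_{o,2}$. I could instead invoke a homogeneity-in-the-bi-limit argument, in which the $\mu_1$ part is homogeneous of negative degree and the $\mu_2$ part of positive degree. That route gives a fixed-time bound, but not an explicit one, so I prefer the cascaded Lyapunov route because it yields the explicit bound $T_{obs}\le T_{o,1}+T_{o,2}$.

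\textbf{Step 3: residual and conclusion.} I reinsert $\Delta y$ from \eqref{eq:two_mode_output}. It enters the injection as a bounded input, $\|\Delta y\|\le\bar\Delta$, so the same inequality acquires an additive term $\sigma\ge0$, exactly as in Step~2 of the proof of Theorem~\ref{thm:fixed_time_thm}. Convergence is then to the origin when $\Delta y=0$, and to a ball of radius $O(\bar\Delta)$ otherwise, in a time independent of $\Tilde{e}(0)$.

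I expect Step~2 to be the main difficulty: achieving full-state fixed-time convergence from injection in the output only, with $\mathcal{K}_i$ satisfying the matching condition. If the single-level Lyapunov route fails, I would fall back on the cascaded/recursive construction.
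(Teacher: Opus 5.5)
Stage (a) of your Step~2 does not go through, and nothing you propose afterwards repairs it. You correctly locate the obstacle.

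\textbf{The output-error inequality fails.} In $\frac{d}{dt}\tfrac12\tilde e_y^\top\tilde e_y$ the drift contributes $\tilde e_y^\top\mathcal{C}_C\mathcal{A}_C\tilde e$. This term has no sign, and it can be positive and linear in $\|\tilde e_y\|$, for example for $\tilde e$ close to some $v\in\ker\mathcal{C}_C$ with $\mathcal{C}_C\mathcal{A}_Cv\neq0$ (such $v$ exists by observability). The $\mu_1$-injection only contributes a term of order $\|\tilde e_y\|^{1+\mu_1}$. Since $1+\mu_1>1$, no $\Gamma_1$ dominates the cross term near $\tilde e_y=0$ while unmeasured error remains. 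Dominating it would need a zero-order (discontinuous) injection whose gain exceeds a bound on $\|\mathcal{C}_C\mathcal{A}_C\tilde e\|$, and any such bound depends on $\tilde e(0)$.

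\textbf{Stage (b) has nothing to work with.} $\lfloor\cdot\rceil^{\mu_1}$ with $\mu_1>0$ is continuous and vanishes on $\{\tilde e_y=0\}$, so there is no nonzero equivalent injection to read off. If $\tilde e_y\equiv0$ on an interval, then (for $\delta\equiv0$) $\dot{\tilde e}=\mathcal{A}_C\tilde e$ there, $\mathcal{C}_C\mathcal{A}_C^k\tilde e=0$ for all $k$, and observability already forces $\tilde e=0$.

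\textbf{Comparison with the paper.} That observability argument is exactly how the paper concludes, after asserting the fixed-time inequality for $V=\tfrac12\tilde e_y^\top\tilde e_y$. So your stage (a) is equivalent to the whole theorem, and (b)--(c) are not a reduction. The paper does not supply the missing estimate either. It posits $\mathcal{C}_C^\top\mathcal{C}_C(\mathcal{A}_C-\mathcal{L}\mathcal{C}_C)+(\mathcal{A}_C-\mathcal{L}\mathcal{C}_C)^\top\mathcal{C}_C^\top\mathcal{C}_C=-Q$ with $Q>0$. This is impossible for that singular weight: the quadratic form vanishes on $\ker\mathcal{C}_C$, and by observability the matrix is in fact indefinite. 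In effect the paper discards the same cross term.

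\textbf{Your fallbacks change the observer.} Cascaded and bi-limit homogeneous designs use exponents that vary along the integrator chain. Here each $\mathcal{K}_i$ applies a single exponent to every row, so the error system is not homogeneous of nonzero degree. Indeed this structure typically gives only exponential convergence in the unmeasured directions. Take the double-integrator case $\dot{\tilde e}_1=\tilde e_2-k_1\lfloor\tilde e_1\rceil^{\mu_1}$, $\dot{\tilde e}_2=-k_2\lfloor\tilde e_1\rceil^{\mu_1}$. One has exactly $\frac{d}{dt}(k_2\tilde e_1-k_1\tilde e_2)=k_2\tilde e_2$. Since $\tilde e_1$ is rapidly slaved to $\lfloor\tilde e_2/k_1\rceil^{1/\mu_1}=o(\tilde e_2)$, this combination decays like $e^{-(k_2/k_1)t}$ and never reaches zero in finite time.

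\textbf{Step~1 has a separate error.} Choosing $\mathcal{K}_i=P^{-1}\mathcal{C}_C^\top\Gamma_i$ realises $P\mathcal{K}_i=\mathcal{C}_C^\top\Gamma_i$. It does not give $P\mathcal{B}_C\in\operatorname{range}(\mathcal{C}_C^\top)$, and here that inclusion is impossible. We have $\mathcal{C}_C\mathcal{B}_C=\mathcal{C}\mathcal{B}=0$, because the outputs are angles and the input enters only velocity rows. Hence $\mathcal{B}_C^\top P\mathcal{B}_C>0$, while $\mathcal{B}_C^\top\mathcal{C}_C^\top F=0$ for every $F$. Even granting a bound $c\bar\xi\|\tilde e_y\|$, a term of order $\|\tilde e_y\|^{1+\mu_1}$ cannot dominate it near zero.

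More basically, with only continuous injections, exact convergence is incompatible with a persistent $\delta\neq0$: on any interval where $\tilde e\equiv0$ the error equation reduces to $\delta=0$. Exact convergence therefore needs $\delta\equiv0$, or an added discontinuous term, which again changes the observer. The paper's final step uses $\dot{\tilde e}=\mathcal{A}_C\tilde e$, i.e.\ it tacitly takes $\delta\equiv0$.

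\textbf{Step~3.} It concerns $\Delta y$, which is not among the hypotheses: in \eqref{transformed_system} one has $g=\mathcal{C}_C\mathbf{z}$ exactly. Its conclusion, convergence to a ball of radius $O(\bar\Delta)$, is also weaker than the claimed convergence to the origin.
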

\proof
Let us choose a Lyapunov function as:
\begin{align}
{V}(t) &= \frac{1}{2} \Tilde{e}^T (\mathcal{C}_C^T\mathcal{C}_C)\Tilde{e} \label{lyap_observer} = \frac{1}{2}\Tilde{e}^T_y \Tilde{e}_y ~~~\quad (\Tilde{e}_y = \mathcal{C}_C\Tilde{e})
\end{align}
The time derivative of output estimation error $\Tilde{e}_y$ using equations \eqref{transformed_system} and \eqref{observer_dynamics} can be expressed as:
\begin{multline}
\dot{\Tilde{e}}_y = \mathcal{C}_C (\mathcal{A}_C z(t) + \mathcal{B}_Cu(t) - \mathcal{A}_C \hat{z}(t) - \mathcal{B}_Cu(t) - \mathcal{L}\mathcal{C}_C e \\
- \mathcal{K}_1 |\Tilde{e}_y|^{\mu_1} \text{sgn}(\Tilde{e}_y)
       - \mathcal{K}_2 |\Tilde{e}_y|^{\mu_2} \text{sgn}(\Tilde{e}_y)) \label{ey_dot}
\end{multline}
Thus,
\begin{align}
    \dot{{V}}(t) &=\frac{1}{2}\Tilde{e}^T_y \dot{\Tilde{e}}_y + \frac{1}{2}\dot{\Tilde{e}}^T_y \Tilde{e}_y \label{lyap_dot_obs} \\
    &= \frac{1}{2}\left[\Tilde{e}^T(\mathcal{C}^T\mathcal{C} (\mathcal{A}_C - \mathcal{L}\mathcal{C}_C) +  (\mathcal{A}_C - \mathcal{L}\mathcal{C}_C)^T \mathcal{C}^T\mathcal{C}) \Tilde{e}\right] \nonumber \\ 
    &- \Tilde{e}^T \mathcal{C}^T\mathcal{C} \mathcal{K}_1 {\lfloor \Tilde{e}_y \rceil^{\mu_1}} - \Tilde{e}^T \mathcal{C}^T\mathcal{C} \mathcal{K}_2 {\lfloor \Tilde{e}_y \rceil^{\mu_2}} \label{V_d1} 
\end{align}
where $\left(\mathcal{C}^T\mathcal{C} (\mathcal{A}_C - \mathcal{L}\mathcal{C}_C) +  (\mathcal{A}_C - \mathcal{L}\mathcal{C}_C)^T \mathcal{C}^T\mathcal{C}\right) = - Q$, here $Q$ is a positive definite matrix. Therefore,
\begin{align*}
\dot{{V}}(t) =& - \frac{1}{2} \Tilde{e}^TQ\Tilde{e} -\Tilde{e}_y^T \mathcal{C} \mathcal{K}_1 \begin{bmatrix}
       |\Tilde{e}_{y_1}|^{\mu_1} \text{sgn}(\Tilde{e}_{y_1}) \\
       |\Tilde{e}_{y_2}|^{\mu_1} \text{sgn}(\Tilde{e}_{y_2}) 
   \end{bmatrix} \\ 
   & -  \Tilde{e}_y^T \mathcal{C} \mathcal{K}_2 \begin{bmatrix}
       |\Tilde{e}_{y_1}|^{\mu_2} \text{sgn}(\Tilde{e}_{y_1}) \\
       |\Tilde{e}_{y_2}|^{\mu_2} \text{sgn}(\Tilde{e}_{y_2}) 
   \end{bmatrix} \\
    \dot{{V}}(t) \leq &- \|\mathcal{C}_C\| \|\mathcal{K}_1\| \begin{bmatrix}
        \Tilde{e}_{y_1} &  \Tilde{e}_{y_2} 
    \end{bmatrix} \begin{bmatrix}
       |\Tilde{e}_{y_1}|^{\mu_1} \text{sgn}(\Tilde{e}_{y_1}) \\
       |\Tilde{e}_{y_2}|^{\mu_1} \text{sgn}(\Tilde{e}_{y_2}) 
   \end{bmatrix}   \\ 
    &- \|\mathcal{C}_C\| \|\mathcal{K}_2\| \begin{bmatrix}
        \Tilde{e}_{y_1} &  \Tilde{e}_{y_2}
    \end{bmatrix} \begin{bmatrix}
       |\Tilde{e}_{y_1}|^{\mu_2} \text{sgn}(\Tilde{e}_{y_1}) \\
       |\Tilde{e}_{y_2}|^{\mu_2} \text{sgn}(\Tilde{e}_{y_2}) 
   \end{bmatrix} \\
    \leq & - \|\mathcal{C}_C\| \|\mathcal{K}_1\| \sum_{i=1}^{2} |\Tilde{e}_{y_i}|^{(\mu_1+1)} -  \|\mathcal{C}_C\| \|\mathcal{K}_2\| \sum_{i=1}^{2} |\Tilde{e}_{y_i}|^{(\mu_2+1)} 
\end{align*}
\begin{align}
\dot{{V}}(t)  \leq & - \|\mathcal{C}_C\| \|\mathcal{K}_1\| \left(\sum_{i=1}^{2} |\Tilde{e}_{y_i}|^2\right)^{\left(\frac{\mu_1+1}{2}\right)} \nonumber \\
&  -  \|\mathcal{C}_C\| \|\mathcal{K}_2\| \left(\sum_{i=1}^{2} |\Tilde{e}_{y_i}|^2\right)^{\left(\frac{\mu_2+1}{2}\right)}  \label{V2_dot_final}  
\end{align} 
From equation \eqref{lyap_observer} and \eqref{V2_dot_final},
\begin{align}
 \dot{{V}}(t) \leq &- \varepsilon_1 {V}^{\left(\frac{\mu_1+1}{2}\right)} - \varepsilon_2 {V}^{\left(\frac{\mu_2+1}{2}\right)} \label{V2_d} 
\end{align}    
where $\varepsilon_1 = \|\mathcal{C}_C\| \|\mathcal{K}_1\| $ and $\varepsilon_2 = \|\mathcal{C}_C\| \|\mathcal{K}_2\|$ are both positive definite. 

Define
\[
\rho_1 =\frac{\mu_1+1}{2}\in(0,1),\qquad
\rho_2 =\frac{\mu_2+1}{2}>1.
\]
Then, by the fixed-time lemma \ref{fixed_time_lemma} for scalar systems, the origin is fixed-time stable and the settling time is uniformly bounded by
\begin{equation}
T_{FTSMO} \le \frac{1}{\varepsilon_1(1-\rho_1)}+\frac{1}{\varepsilon_2(\rho_2-1)}.
\label{eq:T_bound_observer}
\end{equation}
Substituting $\rho_1$ and $\rho_2$ yields the explicit bound
\begin{equation}
T_{FTSMO}
\le \frac{2}{\varepsilon_1(1-\mu_1)}
   +\frac{2}{\varepsilon_2(\mu_2-1)}.
\label{eq:T_bound_explicit}
\end{equation}

By using the fixed time lemma \ref{fixed_time_lemma} in equation \eqref{V2_dot_final}, it can be concluded that the Lyapunov function ${V}(t)$ will converge to the origin in a fixed amount of time. Further, the output estimation error $\Tilde{e}_y$ will also converge to zero in a fixed amount of time. 
\[
\Tilde{e}_y(t) = \mathcal{C}_C \Tilde{e}(t) \equiv 0, \quad \forall t \ge T_{FTSMO}.
\]
Then all time derivatives of $\Tilde{e}_y(t)$ are also zero for $t \ge T_{FTSMO}$, i.e.,
\[
\Tilde{e}_y^{(k)}(t) = \mathcal{C}_C \mathcal{A}_C^k \Tilde{e}(t) \equiv 0, \quad \forall k = 0,1,\dots,n-1,\ \forall t \ge T_{FTSMO}.
\]
Evaluating at $t = T_{FTSMO}$, we obtain
\[
\begin{bmatrix}
\mathcal{C}_C &
\mathcal{C}_C \mathcal{A}_C &
\cdots &
\mathcal{C}_C \mathcal{A}_C^{n-1}
\end{bmatrix}^\top
\Tilde{e}(T_{FTSMO}) = 0.
\]
Since the pair $(\mathcal{A}_C,\mathcal{C}_C)$ is observable, the observability matrix
\[
\mathcal{O} =
\begin{bmatrix}
\mathcal{C}_C &
\mathcal{C}_C \mathcal{A}_C &
\cdots &
\mathcal{C}_C \mathcal{A}_C^{n-1}
\end{bmatrix}^\top
\]
has full rank, which implies that $\Tilde{e}(T_{FTSMO}) = 0$. By uniqueness of solutions to the system $\dot{\Tilde{e}}(T_{FTSMO}) = \mathcal{A}_C \Tilde{e}$, it follows that $\Tilde{e}(t) \equiv 0$ for all $t \ge T_{FTSMO}$. Therefore, the state error converges to zero in a fixed time. This completes the proof.
\endproof
\begin{remark}[Integration with the controller]
The observer \eqref{observer_dynamics} provides estimates $\hat{z}$ used by the fixed-time controller of Sec.~\ref{sec:controller}. Since the observer error $e_z$ converges within $T_{\text{FTSMO}}$, which is independent of initial conditions, the overall closed loop exhibits a cascaded fixed-time property.  
By designing $L_2$ such that $T_{\text{FTSMO}} < T_{\text{ctrl}}$, 
the controller performance remains unaffected by the transient of the observer.
\end{remark}
\vspace{-0.25cm}
\subsection{Composite Controller Design}
\label{subsec:composite_controller}
The previous sections established two key components of the overall control architecture:
\begin{itemize}
    \item[(i)] A \emph{non-singular fixed-time controller} (Theorem~\ref{thm:fixed_time_thm})
    ensuring finite and uniform convergence of the transformed tracking error \( e = z_1 - z_{d_1} \).
    \item[(ii)] A \emph{fixed-time sliding-mode observer} (Theorem~\ref{theorem_2})
    guaranteeing that the transformed state estimation error \( e_z = \hat{z} - z \)
    vanishes within a bounded fixed time \( T_{\text{FTSMO}} \), independently of initial conditions.
\end{itemize}

To realise a fully implementable scheme, the controller in~\eqref{eq:control} is modified to use
the estimated transformed states \(\hat{z}\) obtained from the observer~\eqref{observer_dynamics}.
The resulting composite closed-loop controller is given by
\begin{multline}\label{eq:composite_u} 
u = -f(\hat{z}) - \Phi(t,\hat{z})
    - c_1|s_3(\hat{z})|^{p}\mathrm{sgn}(s_3(\hat{z}))
   \\ - c_2|s_3(\hat{z})|^{q}\mathrm{sgn}(s_3(\hat{z}))
    - \eta\,\mathrm{sgn}(s_3(\hat{z})), 
\end{multline}

where \(s_3(\hat{z})\) denotes the terminal sliding variable computed using the observer states according to~\eqref{eq:surfaces}, and the auxiliary function \(\Phi(t,\hat{z})\) is constructed from the desired trajectory and its derivatives.

\vspace{2mm}
\noindent\textbf{Closed-loop structure:}
The overall system can be viewed as a cascade of two fixed-time subsystems:
\begin{align*}
&\text{(observer subsystem)} \;\; e_z \;\xrightarrow[\text{FTS in } T_{\text{FTSMO}}]{}\; 0,\\
&\text{(controller subsystem)} \;\; e \;\xrightarrow[\text{FTS in } T_{\text{ctrl}}]{}\; 0.
\end{align*}
From Theorems~\ref{thm:fixed_time_thm} and~\ref{theorem_2}, both subsystems are globally fixed-time stable, with respective settling-time bounds in \eqref{eq:T_settling_control} and \eqref{eq:T_bound_observer}.

If the observer gains are selected such that
\begin{equation}
    T_{\text{FTSMO}} < T_{\text{ctrl}},
    \label{eq:time_hierarchy}
\end{equation}
Then the composite closed loop exhibits \emph{cascaded fixed-time stability}, and both the estimated and actual tracking errors converge to zero within a total settling time bounded by
\begin{equation}
    T_{\text{total}} \le T_{\text{FTSMO}} + T_{\text{ctrl}}.
\end{equation}

\begin{remark}[Composite closed-loop stability]
Condition~\eqref{eq:time_hierarchy} ensures a strict time-scale separation between estimation and control. Hence, the state estimates \(\hat{z}\) converge before the controller’s surfaces \(s_i(\hat{z})\)
reach the origin. Under this hierarchy, the overall composite system is globally fixed-time stable in the sense of Polyakov \cite{polyakov2011nonlinear}, and the final control input~\eqref{eq:composite_u} can be implemented directly using measured output \(y\) and known model quantities.
\end{remark}

\section{Simulation Results}
The proposed composite fixed-time controller in \eqref{eq:composite_u} for the SLFM system is validated using numerical simulation. The proposed controller ensures convergence of the tip position to the desired position within a fixed time. The simulations were conducted using MATLAB/Simulink (R2018a) on a 64-bit Windows system with an Intel Core i7 processor and 16 GB of RAM. Table \ref{tab:flexible_link_param} gives the physical parameters for numerical simulation. The desired tip position is $\frac{\pi}{4}$. 

\begin{table}[ht]
\centering
\begin{tabular}{|c|c|c|c|}
\hline
\textbf{Parameter}  & \textbf{Value} & \textbf{Parameter}  & \textbf{Value}   \\
\hline
$\rho$ & 0.5  $kg/m$ & $\phi'_1(0)$ & 32.8184 \\
l & 1  $m$ & $\omega_1$ & 20.53 $rad/sec$   \\
$m_p$ & 0  $kg$ &  $J_p$ & 0 $kg-m^2$   \\
$J_0$ & 0.002  $kg-m^2$ & $\phi_1(l)$ & 0.3214   \\
$EI$ & 1  $N-m^2$ &  & \\
\hline
\end{tabular}
\caption{Physical Parameters of SLFM}
\label{tab:flexible_link_param}
\end{table}
Fig.~\ref{fig:fx_time_o/p} reports the time histories of the tip angle $\theta_t(t)$ (top subplot) and the joint angle 
$\theta_c(t)$ (bottom subplot) of the SLFM under the proposed controller, FOSMC \cite{sharma2024position}, and a PD controller~\cite{zollo2007pd}. The desired position is indicated by the dotted magenta line. As seen in the tip response, the proposed scheme achieves a smooth rise and accurate regulation, with negligible overshoot (below $\approx2\%$) and a settling time of about $2.5-3~sec$. In contrast, FOSMC exhibits a pronounced transient overshoot (exceeding $\approx40\%$) and oscillatory convergence, whereas the PD controller shows slower convergence accompanied by sustained oscillations around the reference.

The joint angle trajectories further corroborate these observations. The proposed controller yields a comparatively well-damped joint response with reduced oscillation amplitude after approximately $3~sec$, indicating effective suppression of flexible-mode excitation. Conversely, the FOSMC response contains large-amplitude oscillations and sharp excursions during the transient, while the PD controller retains persistent oscillatory behaviour. Collectively, these results suggest that the proposed design improves tracking performance at both the joint and tip levels while mitigating vibration-induced oscillations, consistent with the stability analysis.
\begin{figure}
    \centering
    \includegraphics[width= 8cm, height = 4cm]{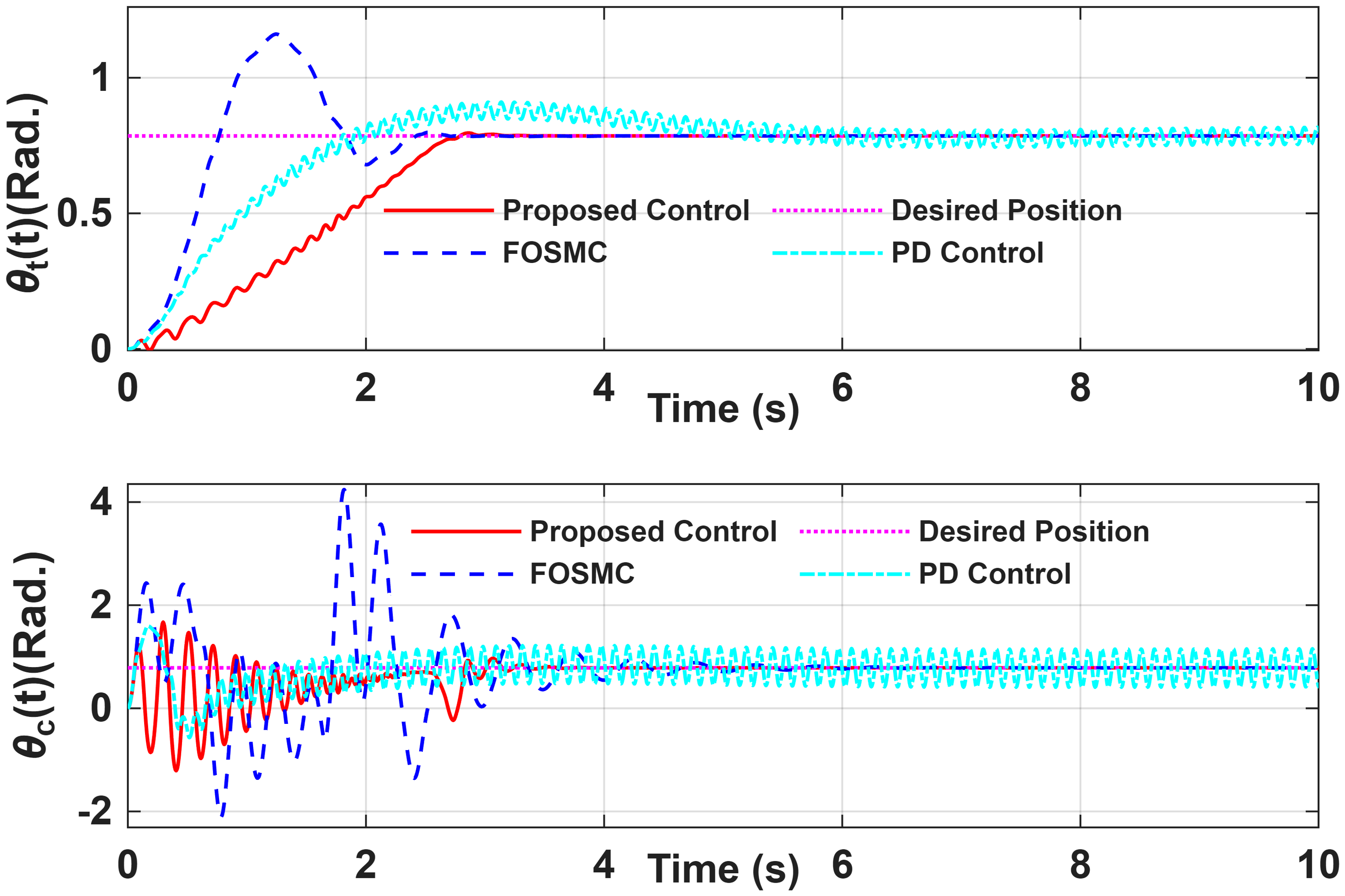}
    \caption{Output Plots}
    \label{fig:fx_time_o/p}
\end{figure} 
\begin{figure}
    \centering
    \includegraphics[width= 8cm, height = 4.5cm]{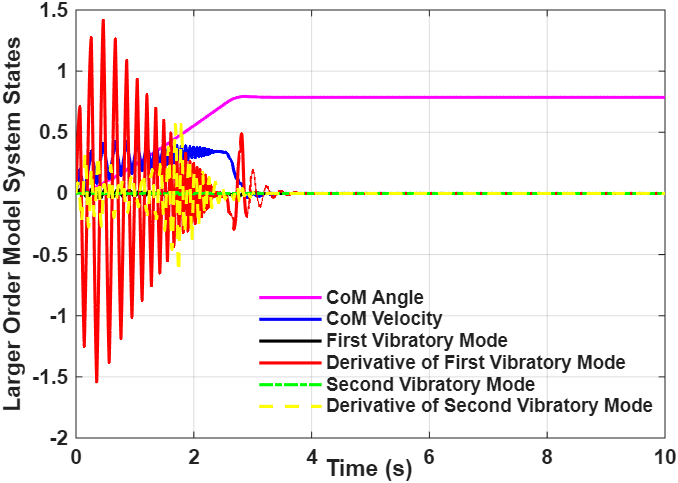}
    \caption{States Plot for Original System}
    \label{fig:fx_time_states}
\end{figure} 

\begin{figure}
    \centering
    \includegraphics[width= 8cm, height = 6cm]{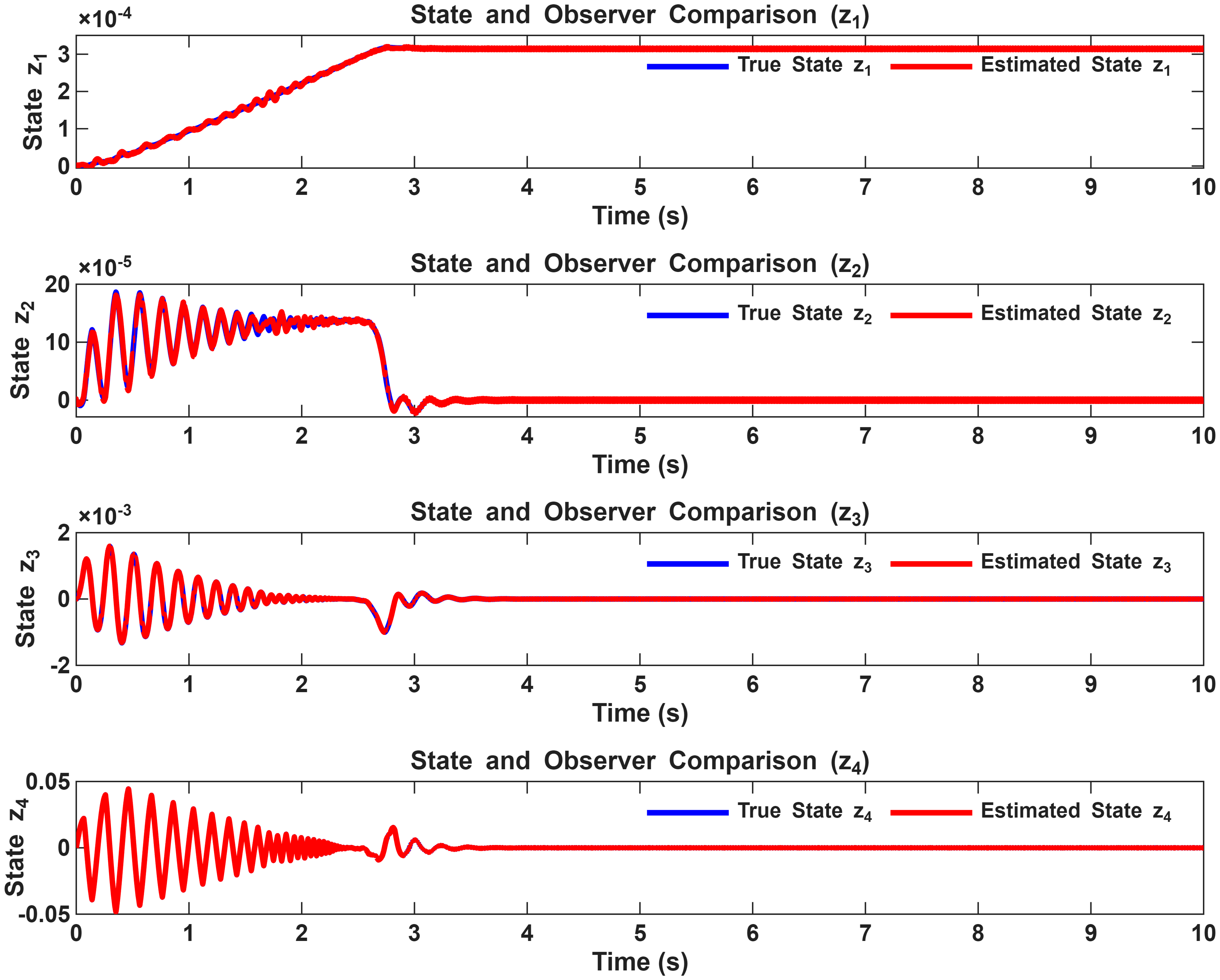}
    \caption{States and Observer Comparison}
    \label{fig:fx_time_state_obs}
\end{figure}
The performance parameters presented in Table~\ref{tab:regulation} demonstrate the effectiveness of the proposed control strategy in regulation when compared to alternative controllers. 
Table~\ref{tab:err_perf_reg} reports the integral error indices ISE, IAE, ITSE, and ITAE for both the tip angle and the joint angle. The proposed controller provides the most balanced overall behaviour, particularly in terms of reduced oscillation, faster settling, and improved joint-angle regulation. Although some tip-angle integral indices are not uniformly smaller than those of all benchmark controllers, the proposed scheme offers a superior tradeoff between transient smoothness, vibration suppression, and steady-state regulation.
\begin{table}[!ht]
\centering
\renewcommand{\arraystretch}{1.0}
\setlength{\tabcolsep}{1pt}
{
\caption{Control Performances for Regulation Problem}
\label{tab:regulation}
\begin{tabular}{|p{22mm}| p{14mm}| p{14mm}| p{17.5mm}| p{17.5mm}| }
\hline
\textbf{Control schemes} & \textbf{$T_{settling}$, $ {\theta_t(t)}$ (sec.)} & \textbf{$T_{settling}$, $ {\theta_c(t)}$ (sec.)} & \textbf{Steady state, $\|\theta_t(t)\|_2$ (rad)} & \textbf{Steady state, $\|\theta_c(t)\|_2$ (rad)}  \\
\hline
SMC 2-Mode~\cite{sharma2024position} & 3.88 & 6.4 & $3.4 \times 10^{-3}$ & $5.4 \times 10^{-3}$  \\
\hline
PD Control~\cite{zollo2007pd} & 6.1 & 5.4 & $1.03 \times 10^{-3}$ & $24.1 \times 10^{-2}$  \\
\hline
Proposed & 2.66 & 3.567 & $2.978 \times 10^{-4}$ & $2.970 \times 10^{-4}$  \\
\hline
\end{tabular}
}
\end{table}

\begin{table}[!ht]
\centering
\renewcommand{\arraystretch}{1.1}
\setlength{\tabcolsep}{1pt}
{
\caption{Comparison of error performance measures}
\label{tab:err_perf_reg}
\begin{tabular}{|c|c|c|c|c|c|c|c|c|}
\hline
 & \multicolumn{4}{c|}{\textbf{Tip Angle $\theta_t(t)$}} & \multicolumn{4}{c|}{\textbf{Joint Angle $\theta_c(t)$}} \\
\hline
\textbf{Control schemes} & \textbf{ISE} & \textbf{IAE} & \textbf{ITSE} & \textbf{ITAE} & \textbf{ISE} & \textbf{IAE} & \textbf{ITSE} & \textbf{ITAE} \\
\hline
SMC 2-Mode~\cite{sharma2024position} & 0.305 & 0.659 & 0.146 & 0.497 & 5.472 & 3.605 & 8.573 & 6.429 \\
\hline
PD Control~\cite{zollo2007pd} & 0.357 & 1.053 & 0.236 & 3.082 & 1.527 & 3.970 & 7.996 & 26.696 \\
\hline
\textbf{Proposed} & 0.651 & 1.1651 & 0.462 & 1.088 & 1.308 & 1.464 & 1.100 & 1.685 \\
\hline
\end{tabular}
}
\end{table}

In figure \ref{fig:fx_time_states}, the plot between states of the system with only the first vibratory mode considered versus time is shown. It is evident from the figure that the $CoM$ angle also converges to the desired position with suppressed vibration. The transient response is smooth, and convergence is achieved in approx. $2.3~sec$. 
\begin{figure}
    \centering
    \includegraphics[width= 8cm, height = 4cm]{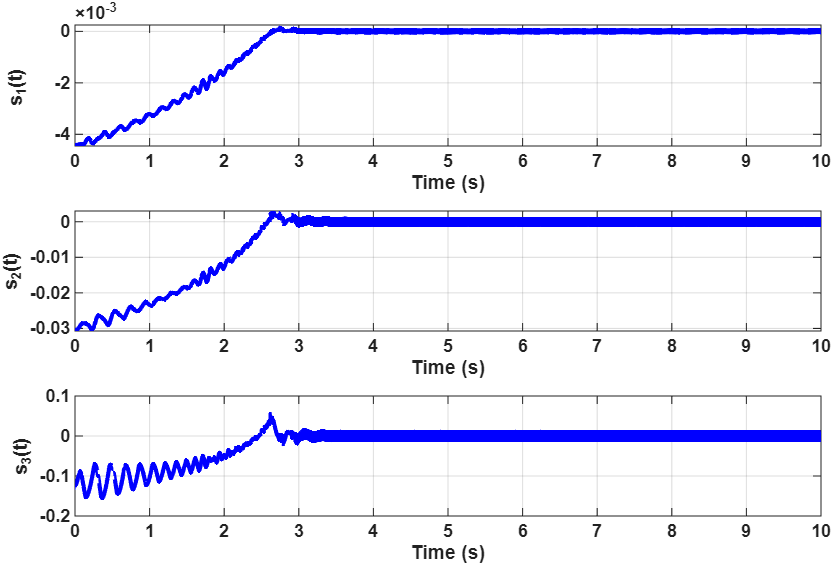}
    \caption{Sliding Variable Plots}
    \label{fig:fx_time_sliding_fn}
\end{figure}
The plot of the transformed system states versus time, along with the estimated states, is displayed in figure \ref{fig:fx_time_state_obs}. As evident in the figure, the estimated states converge rapidly to the actual system states, resulting in negligible estimation error. This demonstrates the observer's ability to accurately estimate the system states, even in the presence of disturbances.

The sliding variables plot with respect to time is given in figure \ref{fig:fx_time_sliding_fn}. The sliding variable function $s_3$ converges to the origin in a fixed time, as shown in the figure. The figure also shows that the sliding variables $s_2$ and $s_1$ converge to the origin in a fixed time.

Fig.~\ref{fig:large_order_system_states} shows time responses of the higher-order system states under the proposed control, including the CoM angle and velocity, as well as the first and second vibratory modes and their derivatives. The flexible modes and their rates exhibit bounded oscillations during the transient phase and converge rapidly to the vicinity of zero within approximately $3~sec$, indicating effective suppression of vibration dynamics. Meanwhile, the CoM angle smoothly tracks the desired profile without exciting residual flexible oscillations, demonstrating coordinated stabilisation of both rigid-body and flexible dynamics.

Fig.~\ref{fig:Sliding_var_comp}, shows sliding function response comparison under the proposed control ($s_0(t)=e$) and FOSMC ($s(t)$). The proposed scheme achieves fixed-time reaching of the sliding manifold, driving the sliding variable to a small neighbourhood of the origin within approximately $2.5~sec$ and maintaining it within a tight bound (on the order of $10^{-3}$). In contrast, FOSMC exhibits larger transient excursions and reaches a comparable neighbourhood only after about $4~sec$, with noticeable residual oscillations.
\begin{figure}
    \centering
    \includegraphics[width= 8cm, height = 4cm]{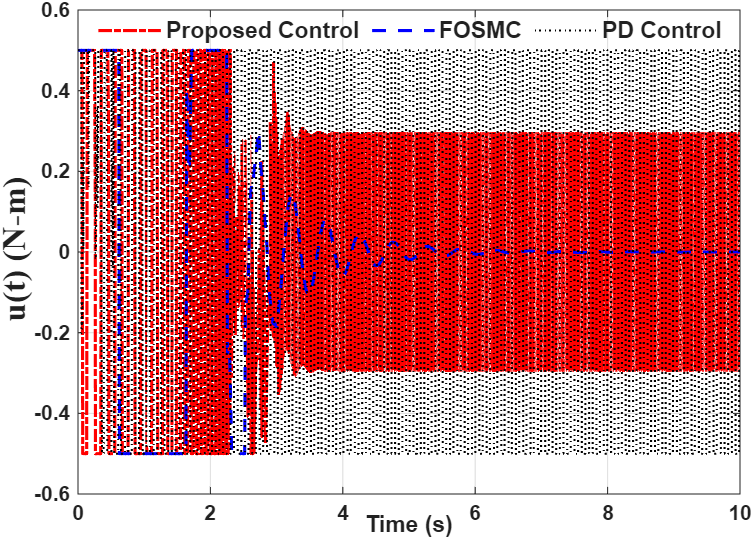}
    \caption{Larger Order System States}
    \label{fig:large_order_system_states}
\end{figure}

\begin{figure}
    \centering
    \includegraphics[width= 8cm, height = 4cm]{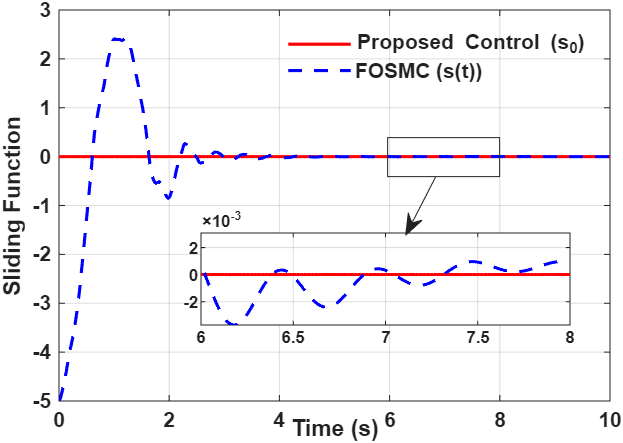}
    \caption{Sliding Variable Comparison Plot}
    \label{fig:Sliding_var_comp}
\end{figure}
\begin{figure}
    \centering
    \includegraphics[width= 8cm, height = 5cm]{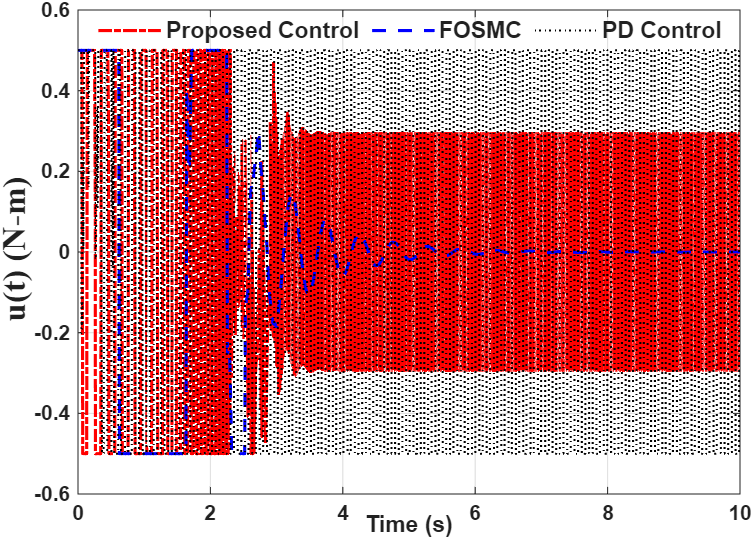}
    \caption{Control Input Plot}
    \label{fig:fx_time_control}
\end{figure}

The control input plot with respect to time is shown in Figure \ref{fig:fx_time_control}. The proposed scheme generates bounded control torques with rapidly attenuated oscillations after the transient phase ($\approx3~sec$). In contrast, FOSMC exhibits pronounced high-frequency oscillations during the reaching phase, while the PD controller produces larger steady-state control activity. The reduced oscillatory content and bounded amplitude of the proposed control indicate improved control smoothness and reduced actuator stress.

\section{Experimental Validation}
\label{sec:experimental_validation}
This section details the experimental validation of the proposed control strategy. The controller is implemented and evaluated on the Rotary Flexible Link setup developed by Quanser Inc., as described in \cite{apkarian2011workbook}. The experiments are conducted at the Control and Automation Laboratory, Department of Electrical Engineering, Indian Institute of Technology Delhi.
\begin{figure}
    \centering
    \includegraphics[width=8cm, height=5.0cm]{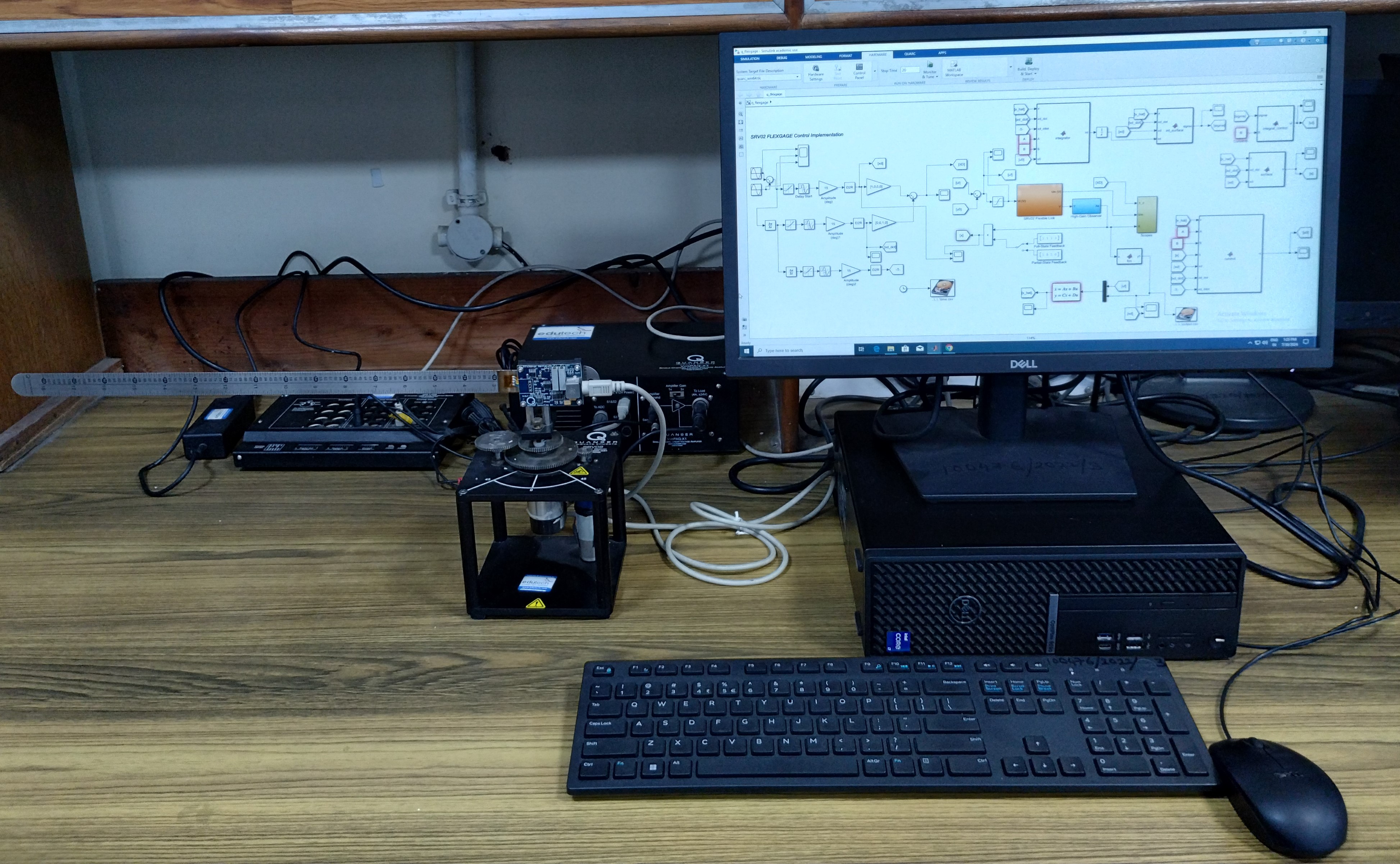}
    \caption{Complete hardware experimental set-up comprising the 
    flexible link, host system, power amplifier, and data 
    acquisition board.}
    \label{fig:experimental_setup}
\end{figure}
The experimental setup comprises a lightweight, stainless-steel, flexible link attached to a Quanser SRV02 DC servomotor. The angular position of the motor shaft is measured using a high-resolution optical encoder, while the link deflection is obtained from a strain gauge mounted near its base. The servomotor is actuated via a Quanser VoltPAQ-X1 power amplifier.

The control algorithm is implemented in real time on a host computer running MATLAB/Simulink with the QUARC real-time control software environment, interfacing with the hardware via a Q8-USB data acquisition board. The proposed controller is developed in Simulink and deployed to the experimental platform at a fixed sampling interval. During operation, the measured signals are fed back to the controller, while the unmeasured states are estimated using the designed observer.

\begin{figure}
    \centering
    \subfloat[The position of the tip]{%
        \includegraphics[width=0.9\columnwidth]{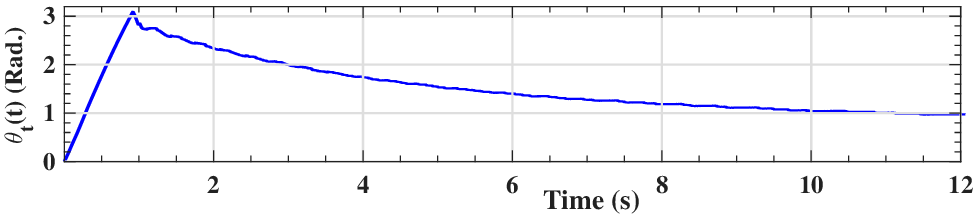}%
        \label{fig:state1}}\\
    \subfloat[Control input (motor voltage) applied to the SRV02]{%
        \includegraphics[width=0.85\columnwidth]{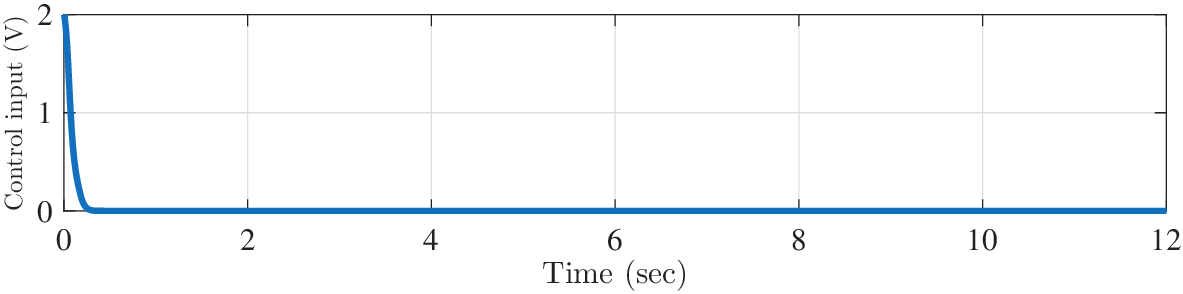}%
        \label{fig:exp_control}}
    \caption{Experimental time responses of the servo angle $x_1$ and tip deflection $x_2$ of the Quanser Rotary Flexible Link along with the proposed control law.}
    \label{fig:exp_states}
\end{figure}
The experimental results obtained from the Rotary Flexible Link set-up are presented in Fig.~\ref{fig:exp_states} and 
Fig.~\ref{fig:exp_control}. Figure~\ref{fig:exp_states} shows the time evolution of the four system states, while Fig.~\ref{fig:exp_control} depicts the corresponding control input applied to the DC motor.

The closed-loop responses obtained from the Quanser Rotary Flexible Link platform are shown in Fig.~\ref{fig:state1}, where the solid red curve denotes the measured angle with the tip deflection and the dashed blue curve denotes the corresponding estimate produced by the observer. The servo angle $x_1$ is driven from rest toward the desired set-point of $\pi/4~\mathrm{rad}$, which it reaches within roughly one second and subsequently holds with only a small steady-state offset. A brief transient excursion due to tip deflection is observed and then damped smoothly, confirming that the elastic mode of the flexible link is effectively suppressed by the proposed controller. Throughout the response, the estimated trajectories remain in close agreement with the measured ones, indicating that the observer reconstructs the internal states with good fidelity. The corresponding control effort applied to the DC motor, shown in Fig.~\ref{fig:exp_control}, stays well within the actuator saturation limits and exhibits no excessive chattering. Taken together, the hardware results are in good agreement with the theoretical predictions and demonstrate the practical applicability of the proposed control strategy on a physical flexible-link system.


\section{Conclusion}
This paper presented an observer-based composite control scheme for tip-position regulation of a single-link flexible manipulator. A nested non-singular terminal sliding-mode controller was developed to achieve practical fixed-time regulation of the tip position, while a nonlinear sliding-mode observer was introduced to reconstruct the transformed states from the measured outputs. Lyapunov-based analysis established practical fixed-time convergence of both the controller surfaces and the observer error, with explicit settling-time bounds. Numerical simulations and hardware experiments demonstrated improved transient behaviour, reduced oscillation, and smoother control effort compared with benchmark controllers. Future work will focus on extending the proposed framework to higher-mode models and on deriving exact fixed-time observer-controller synthesis under reduced-order modelling mismatch.

\bibliographystyle{unsrt}  
\bibliography{references}

\end{document}